\begin{document}

\newcommand*{\cl}[1]{{\mathcal{#1}}}
\newcommand*{\bb}[1]{{\mathbb{#1}}}
\newcommand{\ket}[1]{|#1\rangle}
\newcommand{\bra}[1]{\langle#1|}
\newcommand{\inn}[2]{\langle#1|#2\rangle}
\newcommand{\proj}[2]{| #1 \rangle\!\langle #2 |}
\newcommand*{\tn}[1]{{\textnormal{#1}}}
\newcommand*{\1}{{\mathbb{1}}}
\newcommand{\T}{\mbox{$\textnormal{Tr}$}}
\newcommand{\todo}[1]{\textcolor[rgb]{0.99,0.1,0.3}{#1}}

\theoremstyle{plain}
\newtheorem{prop}{Proposition}
\newtheorem{proposition}{Proposition}
\newtheorem{theorem}{Theorem}
\newtheorem{lemma}{Lemma}
\newtheorem{remark}{Remark}

\theoremstyle{definition}
\newtheorem{definition}{Definition}

\title{Weighted $p$-R\'{e}nyi Entropy Power Inequality: Information Theory to Quantum Shannon Theory}
\author{Junseo Lee}
\email{homology.manifold@gmail.com}
\affiliation{School of Electrical and Electronic Engineering, Yonsei University, Seoul 03722, Korea}
\affiliation{Quantum Computing R\&D, Norma Inc., Seoul 04799, Korea}
\author{Hyeonjun Yeo}
\email{duguswns11@snu.ac.kr}
\affiliation{Department of Physics and Astronomy, Seoul National University, Seoul 08826, Korea}
\author{Kabgyun Jeong}
\email{kgjeong6@snu.ac.kr}
\affiliation{Research Institute of Mathematics, Seoul National University, Seoul 08826, Korea}
\affiliation{School of Computational Sciences, Korea Institute for Advanced Study, Seoul 02455, Korea}

\date{\today}

\begin{abstract}
We study the $p$-R\'{e}nyi entropy power inequality with a weight factor $t$ on two independent continuous random variables $X$ and $Y$. The extension essentially relies on a modulation on the sharp Young's inequality due to Bobkov and Marsiglietti. Our research provides a key result that can be used as a fundamental research finding in quantum Shannon theory, as it offers a R\'{e}nyi version of the entropy power inequality for quantum systems.
\end{abstract}

\maketitle

%%%%%
\section{Introduction} \label{intro}
As a relevant measure of information content, Shannon introduced the differential entropy~\cite{S48} in the form of
\begin{equation}
	H(X)=-\int_{\bb{R}^d}p_X(x)\log p_X(x)\tn{d}x,
\end{equation}
where $X\in\bb{R}^d$ is a continuous random variable with a probability density function $p_X$. It reveals that this quantity is inherently the same as thermodynamic entropy, as proposed by Boltzmann. Shannon's novel perspective seamlessly merges two distinct realms: information theory and physics. These disciplines have been instrumental in driving numerous groundbreaking discoveries, including Bell's inequality~\cite{B64} and the area law~\cite{ECP10}. The significance of entropy persists to this day, exemplified by Von Neumann entropy, which measures quantum entanglement~\cite{CC05}.

Furthermore, Shannon proposed another quantity, which is called the associated entropy power
\begin{equation}
	\mathbf{V}(X)=\exp\left(\frac{2}{d}H(X)\right).
\end{equation}
These entropic quantities possess several fundamental properties, with one of the most notable being the entropy power inequality (EPI):
\begin{equation} \label{eq:epi}
	\mathbf{V}(X+Y)\ge\mathbf{V}(X)+\mathbf{V}(Y).
\end{equation}
The first complete proof for the inequality~\eqref{eq:epi} was provided by Stam~\cite{S59}, and detailed proofs by Blachman~\cite{B65}. The second approach for the proof was given by Beckner~\cite{W75}, and Brascamp and Lieb~\cite{BL76,L78}, and the final one was recently obtained by Rioul~\cite{R17}. We can classify and review those three-type of the mathematical proofs on the EPI, including several variants, as follows: That is, it is given by
\begin{itemize}%[(i)]
	\item[$\bullet$] Gaussian perturbation, de Bruijn's identity, and convexity of Fisher information~\cite{S59,B65,CS91,DSV06,VG06,B07,R07,R11}; 
	\item[$\bullet$] Sharp Young's convolution inequality~\cite{W75,BL76,L78,SV00,WM13,WM14}; and
	\item[$\bullet$] Change of variables on convex bodies~\cite{R17}.  
\end{itemize}

It was also known that above EPI~\eqref{eq:epi} is essentially equivalent to the linear form of
\begin{equation} \label{eq:lepi}
	H(\sqrt{t}X+\sqrt{1-t}Y)\ge tH(X)+(1-t)H(Y)
\end{equation}
for any weight-parameter $t\in(0,1)$, and we can find the proofs of the equivalence relation in Refs.~\cite{VG06,R11,DCT91,MMX17}.
\bigskip
\begin{remark} \label{rmk1}
	For any random variable $X\in\bb{R}^d$ and for any $t>0$, the differential entropy and its entropy power satisfy the scaling properties
	\begin{equation*}
		H(\sqrt{t}X)=H(X)+\frac{d}{2}\log t\;\;\; \tn{and}\;\;\; \mathbf{V}(\sqrt{t}X)=t\mathbf{V}(X).
	\end{equation*}
\end{remark}

There are various entropic quantities, including min entropy, max entropy, collision entropy, and more. In 1961, Alfr\'{e}d R\'{e}nyi introduced R\'{e}nyi entropy, a concept that generalizes these entropies while preserving the additivity of independent variables~\cite{R61}.
Now, we delve into R\'{e}nyi (differential) entropy and the associated entropy power as an extension of Shannon's concepts. These are defined as follows, for any order $p>0$,
\begin{align}
	H_p(X) &= \frac{1}{1-p}\log\int_{\mathbb{R}^d}p_X^{p}(x)\mathrm{d}x, \\
	\mathbf{V}_p(X) &= \exp\left(\frac{2}{d}H_p(X)\right).
\end{align}
R\'{e}nyi entropy becomes Shannon entropy when the order $p$ approaches 1. Beyond its versatility in expressing various forms of entropy, R\'{e}nyi entropy also establishes a direct connection with free energy in statistical mechanics, where the order $p$ is related to temperature~\cite{B22,FG22}.
Let us consider a probability density function for the thermal equilibrium state at temperature $T_0$, given by
\begin{equation}
	p_0(r, p) = e^{-\mathcal{H}(r, p)/kT_0},
\end{equation}
Here, $\mathcal{H}(r, p)$ represents the Hamiltonian of this state in phase space, and $T_0$ is chosen such that it satisfies the partition function $Z(T_0) = 1$.
Now, if we have the system's Gibbs state defined as
\begin{equation}
	p(r, p) = \frac{e^{-\mathcal{H}(r, p)/kT}}{Z(T)},
\end{equation}
and we express the free energy under this assumption as $F(T)=-kT\ln Z(T)$, we can derive the R\'{e}nyi entropy of order $T_0/T$ as follows:
\begin{equation}
	H_{T_0/T} = -\frac{F(T)}{T-T_0}.
\end{equation}
In the discrete case, R\'{e}nyi entropy is utilized for measuring entanglement entropy~\cite{IMPPTMAM15} and featuring in conformal field theory (CFT)~\cite{P14}. Consequently, R\'{e}nyi entropy is poised to serve as another critical link between information theory and physics.

For these entropic functionals, it was known that~\cite{BM17}
\begin{equation} \label{eq:repi}
	\mathbf{V}_p^\alpha(X+Y)\ge\mathbf{V}_p^\alpha(X)+\mathbf{V}_p^\alpha(Y),
\end{equation}
where $\alpha\ge\frac{p+1}{2}$, and it is called $p$-R\'{e}nyi entropy power inequality of power $\alpha$. However, this inequality~\eqref{eq:repi} has a restriction on the order of $p>1$ --- further it is not true when $p=\infty$. There are several variants of the R\'{e}nyi entropy power inequality~\cite{BC15,RI16}, and the scaling properties~\cite{ST14} are as follows.

\bigskip
\begin{remark} \label{rmk2}
	For any random variable $X\in\mathbb{R}^d$ and $t>0$, the following scaling properties hold for the R\'{e}nyi entropy and associated entropy power:
	\begin{equation} \label{eq:scale}
		H_p(\sqrt{t}X)=H_p(X)+\frac{d}{2}\log t\;\;\; \tn{and}\;\;\; \mathbf{V}_p(\sqrt{t}X)=t^\mu\mathbf{V}_p(X),
	\end{equation}
	where $\mu := (p-1)d+2$ is a positive coefficient.
\end{remark}
\bigskip

We notice that, while Savar\'{e} and Toscani choose the power as $\alpha\ge1+(p-1)\frac{n}{2}$~\cite{ST14}, Bobkov and Marsiglietti take as $\alpha\ge\frac{p+1}{2}$~\cite{BM17} with $p>1$. For convenience, we fix $\mu$ to match $\alpha$ through the context in Ref.~\cite{BM17}.

The aim of this paper is to give a simple extension of the $p$-R\'{e}nyi entropy power inequality, derived by Bobkov and Marsiglietti~\cite{BM17}, in the forms with a weight factor $t$. Before the main proof, we briefly make a summary of the research trend on quantum entropy power inequalities on which we become aware of why the EPIs are essential and powerful in (quantum) information theory. The EPI is a critical concept in quantum Shannon theory, as it plays a significant role in understanding the fundamental limits of quantum communication by providing a framework for analyzing channel capacities as well as detecting a (potential) entanglement in a continuous-variable (CV) quantum system. Our research is highly significant in the field of quantum information theory because it offers a new interpretation of the EPI in quantum systems. Our findings have the potential to contribute significantly to the field by developing a novel quantum entropy power inequality.

Before moving on to the main content, we define the convolution-related symbols used in this paper. The general principle states that for any independent random variables $X$ and $Y$ in $\mathbb{R}^d$, where a convolution operation is denoted as $\boxplus$, it maps $(p_X, p_Y)$ to $p_{X\boxplus Y} := p_Z$ such that $p_Z(z) = \int_{\mathbb{R}^d} p_X(x) p_Y(z-x) \mathrm{d}x$. Moreover, the weighted sum of random variables is defined as $X\boxplus_t Y := \sqrt{t}X + \sqrt{1-t}Y$.
Regarding the quantum convolution operation denoted as $\boxplus_\tau$, it is defined by mapping $(\rho_X, \rho_Y)$ to $\rho_{X\boxplus_\tau Y} := \rho_Z$ in a manner such that $\rho_X^G\boxplus_\tau\rho_Y^G = \T_Y[U_\tau^G(\rho_X^G\otimes\rho_Y^G)U_\tau^{G^\dag}]$, where $U_\tau$ represents any unitary operator acting on the composite quantum system $XY$.

%%%%%
\section{Entropy Power Inequalities in Quantum Regime}\label{sec:qepi}
We briefly introduce entropy power inequalities in quantum Shannon theory. A quantum version of the entropy power inequality is first derived by K\"{o}nig and Smith~\cite{KS14}, and extended to general case~\cite{PMG14} in which they use two independent continuous random variables on bosonic Gaussian systems. While the conventional entropy power inequalities make use of the usual convolution operation, most quantum cases use a beam-splitting or amplifying operation to mix random variables or quantum states. Recently, there are few studies on the case of conditional entropy power inequality on quantum systems~\cite{K15,JLJ18,PT18,P19} as well as a special case for $d$-dimensional quantum bit (namely, qudit)~\cite{ADO16} for highlighting entropy photon number inequality~\cite{GES08}---yet it is unproved.

In information theory, while the channel capacity is always additive, it was reported that quantum channel capacities on quantum channels are \emph{non-additive} via essential quantum effects such as quantum entanglement or super-activation~\cite{SY08,H09,LWZG09,LLSSS23}. This means that determining the quantum channel capacities are extremely hard problem~\cite{H06}. However, it was known that quantum entropy power inequalities can help to obtain a tight upper bound on the quantum channel capacities~\cite{KS13,KS13+,HK18,JLL19,LLKJ19,J20}. We believe that an intimate collaboration between conventional information and quantum information societies could boost the deep understanding for a fundamental and unsolved information-theoretic problems. 

Now, let us return to our main proof for the weighted version of the R\'{e}nyi entropy power inequality, and propose an important conjecture on the entropy power inequality. The subsequent proof plays a crucial role in presenting a new version of the R\'{e}nyi entropy power inequality for bosonic Gaussian quantum systems, corresponding to Theorem \ref{thm:QREPIeq} in the recent paper~\cite{QREPI}.

\bigskip
\begin{theorem}[Quantum $p$-R\'{e}nyi EPI~\cite{QREPI}] \label{thm:QREPIeq}
	For any Gaussian states $\rho_X^G$ and $\rho_Y^G$ in the $D$-mode space $\mathrm{Sp}(2D,\mathbb{R})$, and for any $\tau\in(0,1)$, the following inequality holds in the context of quantum information theory:
	\begin{equation} \label{eq:qrepi-p}
		\mathbf{V}_p^\kappa(\rho_X^G\boxplus_\tau\rho_Y^G)\ge \tau^\kappa\mathbf{V}_p^\kappa(\rho_X^G)+(1-\tau)^\kappa\mathbf{V}_p^\kappa(\rho_Y^G),
	\end{equation}
	where $\boxplus_\tau$ denotes the quantum convolution operation, and $\kappa \ge \frac{p+1}{2}~(\forall p>1)$.
\end{theorem}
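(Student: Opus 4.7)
The plan is to reduce the quantum statement to the weighted classical $p$-R\'{e}nyi EPI proved in the preceding sections, by exploiting the Wigner representation of Gaussian states. For a Gaussian state $\rho^G$ on the $D$-mode phase space, the Wigner function $W_{\rho^G}$ is a positive Gaussian probability density on $\mathbb{R}^{2D}$, so it can serve directly as the classical density required by the classical theorem.

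First, I would establish the Wigner-level analogue of the quantum convolution $\boxplus_\tau$. On the tensor product $\rho_X^G\otimes\rho_Y^G$ the joint Wigner function factorizes; the beam-splitter $U_\tau$ acts as a symplectic orthogonal rotation $(z_X,z_Y)\mapsto(\sqrt{\tau}z_X+\sqrt{1-\tau}z_Y,-\sqrt{1-\tau}z_X+\sqrt{\tau}z_Y)$ on the $4D$-dimensional phase space, and the partial trace $\mathrm{Tr}_Y$ becomes a marginal over the second phase-space block. A short computation then yields
\begin{equation*}
	W_{\rho_X^G\boxplus_\tau \rho_Y^G}(z) = \bigl(W_{\rho_X^G} \boxplus_\tau W_{\rho_Y^G}\bigr)(z),
\end{equation*}
where the right-hand side is the classical weighted convolution defined in the Introduction.

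Next, I would feed these two Gaussian densities into the weighted classical $p$-R\'{e}nyi EPI in ambient dimension $d=2D$, with exponent $\kappa\ge\frac{p+1}{2}$ and $p>1$. This immediately produces the target inequality at the level of classical R\'{e}nyi entropy powers of the Wigner functions, and the final step is to translate these into quantum R\'{e}nyi entropy powers of the Gaussian states, namely~\eqref{eq:qrepi-p}.

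The main obstacle is precisely this last translation. The quantum R\'{e}nyi entropy $H_p(\rho^G)=\frac{1}{1-p}\log\mathrm{Tr}((\rho^G)^p)$ of a Gaussian state is a nontrivial function of its symplectic eigenvalues $\nu_k$, whereas the classical R\'{e}nyi entropy of the Wigner function depends only on $\det\Sigma=\prod_k \nu_k^2$, so the two quantities are not equal term-by-term, nor even by a state-independent constant. To close this gap I would diagonalize each covariance matrix by the Williamson symplectic decomposition, reducing the problem to a single-mode check where both sides of~\eqref{eq:qrepi-p} become explicit one-variable functions of the $\nu_k$; the remaining inequality can then be verified by direct analysis, using monotonicity of the R\'{e}nyi entropy in the symplectic eigenvalues together with the concavity structure underlying the Bobkov--Marsiglietti bound and the scaling behaviour recorded in Remark~\ref{rmk2}.
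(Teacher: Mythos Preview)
The paper does not actually prove Theorem~\ref{thm:QREPIeq}: it is quoted verbatim from Ref.~\cite{QREPI}, and the present paper's own contribution is only the classical weighted inequality of Theorem~\ref{thm:J}, which is advertised as the ingredient feeding into the quantum proof carried out elsewhere. So there is no in-paper argument to compare your proposal against beyond the remark that Theorem~\ref{thm:J} ``plays a crucial role.''

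That said, your proposal has a real gap, and it is exactly the one you flag but do not close. Steps~1--3 are fine: for Gaussian inputs the Wigner functions are bona fide Gaussian densities on $\mathbb{R}^{2D}$, and the beam-splitter $U_\tau$ indeed induces the classical weighted convolution on them, so Theorem~\ref{thm:J} yields
\[
\mathbf{V}_p^{\kappa}\!\bigl(W_{\rho_X^G\boxplus_\tau\rho_Y^G}\bigr)\ \ge\ \tau^\kappa\,\mathbf{V}_p^{\kappa}\!\bigl(W_{\rho_X^G}\bigr)+(1-\tau)^\kappa\,\mathbf{V}_p^{\kappa}\!\bigl(W_{\rho_Y^G}\bigr),
\]
an inequality between \emph{classical} R\'enyi powers of the Wigner densities. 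The target~\eqref{eq:qrepi-p} is an inequality between the \emph{quantum} R\'enyi powers $\mathbf{V}_p(\rho^G)=\exp\!\bigl(\tfrac{2}{D}H_p(\rho^G)\bigr)$ built from $\mathrm{Tr}\,(\rho^G)^p$. As you correctly observe, for a Gaussian state with symplectic spectrum $\{\nu_k\}$ the classical quantity depends only on $\det\Sigma=\prod_k\nu_k^2$, while the quantum one is $\prod_k\bigl[(\nu_k+\tfrac12)^p-(\nu_k-\tfrac12)^p\bigr]^{1/(1-p)}$; there is no state-independent (or even monotone, inequality-preserving) map sending one three-term inequality to the other. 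Your proposed fix---Williamson diagonalize and ``verify by direct analysis''---does not work as stated: the beam-splitter does not in general commute with the two separate Williamson symplectics, so the output is not mode-by-mode thermal relative to the inputs, and the problem does not decouple into independent single-mode checks. Even in a genuinely single-mode situation, the map $\nu\mapsto\mathbf{V}_p^{\kappa}(\rho_\nu)$ versus $\nu\mapsto\mathbf{V}_p^{\kappa}(W_{\rho_\nu})$ is not affine, so an inequality of the form $A\ge B+C$ for the Wigner powers does not transfer to the quantum powers by monotonicity alone. In short, the Wigner route gives you the wrong functional of the covariance matrix, and the ``translation'' step would require an independent argument at least as strong as the theorem itself; the proposal as written does not supply one.
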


%%%%%
\section{R\'{e}nyi Entropy Power Inequality with Weight}
Let $X$ and $Y$ in $\bb{R}^d$ be independent random variables with probability density functions $p_X$ and $p_Y$, respectively. Let us all parameter $p,q,r\ge1$, and those are satisfying the condition $\frac{1}{q}+\frac{1}{r}-\frac{1}{p}=1$. Then, the famous Young's inequality is given by~\cite{W75,BL76}
\begin{equation*}
	\|p_X* p_Y\|_p\le C^{\frac{d}{2}}(p,q,r)\|p_X\|_q\|p_Y\|_r,
\end{equation*}
where $*$ denotes the convolution on densities, $\|\cdot\|_s$ the $L^s$-norm on the non-negative function on $\bb{R}^d$, and thus, $\|p_X\|_s=\mathbf{V}_s(X)^{-\frac{d}{2}\left(1-\frac{1}{s}\right)}$. For convenience, we denote $C=C(p,q,r)$. From the Young's inequality above, we have

\begin{equation} \label{eq:young1}
	\mathbf{V}_p^{1-\frac{1}{p}}(X\boxplus Y)\ge\frac{1}{C}\mathbf{V}_q^{1-\frac{1}{q}}(X)\mathbf{V}_r^{1-\frac{1}{r}}(Y),
\end{equation}
where $X\boxplus Y$ defines the usual convolution operation on random variables $X$ and $Y$. By exploiting mathematical techniques such as H\"{o}lder's inequality and optimization, we can change Eq.~\eqref{eq:young1} into

\begin{equation} \label{eq:young2}
	\mathbf{V}_p^\alpha(X\boxplus Y)\ge{C^{-\frac{\alpha p}{p-1}}}\mathbf{V}_p^{\frac{\alpha p(q-1)}{q(p-1)}}(X)\mathbf{V}_p^{\frac{\alpha p(r-1)}{r(p-1)}}(Y).
\end{equation}
The proof of R\'{e}nyi entropy power inequality can be completed via the following key lemma (Lemma~\ref{lem:const}) under $a:=\mathbf{V}_p^\alpha(X)$ and $b:=\mathbf{V}_p^\alpha(Y)$. (See proof details in~\cite{BM17}.)

\bigskip
\begin{lemma}[Sharp Young's constant~\cite{BM17}]\label{lem:const}
	Let $a,b>0$ satisfy $a+b=1-\frac{1}{p}$. For any $p>1$, there exists $q,r\ge1$ satisfying $\frac{1}{q}+\frac{1}{r}-\frac{1}{p}=1$ such that
	\begin{align}
		{C^{-\frac{\alpha p}{p-1}}}a^{\frac{p(q-1)}{q(p-1)}}b^{\frac{p(r-1)}{r(p-1)}}\ge1-\frac{1}{p}, \label{eq:modyoung}
	\end{align}
	where $C$ is an optimal Young's constant, and $\alpha=\frac{p+1}{2}$.
\end{lemma}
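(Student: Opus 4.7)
The plan is to turn \eqref{eq:modyoung} into a single-variable inequality through a judicious choice of $(q,r)$, and then reduce the remaining content to a concavity statement about an elementary function on $(0,1)$.

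First I would reparametrize. Setting $\lambda:=\frac{p(q-1)}{q(p-1)}$ and $\mu:=\frac{p(r-1)}{r(p-1)}$, a direct computation from the constraint $\frac{1}{q}+\frac{1}{r}-\frac{1}{p}=1$ shows $\lambda+\mu=1$, and the admissible range $q,r\in[1,p]$ corresponds to $\lambda\in[0,1]$. Combined with $a+b=1-\frac{1}{p}$, this puts \eqref{eq:modyoung} in the equivalent form
\begin{equation*}
	C(p,q(\lambda),r(\lambda))^{\alpha p/(p-1)} \;\le\; \frac{a^{\lambda}\,b^{1-\lambda}}{a+b}.
\end{equation*}

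Next I would make the canonical choice $\lambda^{*}:=\frac{a}{a+b}=:s\in(0,1)$, under which $q(\lambda^{*})=\frac{p}{p-s(p-1)}$ and $r(\lambda^{*})$ follows by the symmetry $s\leftrightarrow 1-s$. The right-hand side then collapses to $s^{s}(1-s)^{1-s}$, which equals $1$ at $s\in\{0,1\}$ and attains its minimum $\tfrac{1}{2}$ at $s=\tfrac{1}{2}$; at these endpoints one of $q,r$ equals $1$ and the sharp Young's constant degenerates to $C=1$, so the claim is trivially tight. The interior content is therefore the one-variable bound
\begin{equation*}
	g(s) \;:=\; s\log s+(1-s)\log(1-s)-\tfrac{\alpha p}{p-1}\log C\bigl(p,q(s),r(s)\bigr) \;\ge\; 0
\end{equation*}
on $(0,1)$, with $g(0)=g(1)=0$. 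I would establish this by showing that $g$ is concave, which---since $\bigl(s\log s+(1-s)\log(1-s)\bigr)''=\frac{1}{s(1-s)}$---reduces to the pointwise lower bound $(\log C)''(s)\ge\frac{p-1}{\alpha p}\cdot\frac{1}{s(1-s)}$.

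The main obstacle is this last inequality. I would substitute the Beckner--Brascamp--Lieb closed form $C=(A_{q}A_{r}/A_{p'})^{2}$ with $A_{s}^{2}=s^{1/s}/(s')^{1/s'}$ and compute $(\log C)''(s)$ along the curve $(q(s),r(s))$ in terms of the Beckner log-kernel $\psi(t):=\frac{\log t}{t}-\frac{\log t'}{t'}$. The resulting expression is elementary but lengthy; once organized using $\psi$ and the symmetry $s\leftrightarrow 1-s$, the required bound should reduce to a single algebraic inequality in $p$ and $s$, with $\alpha=\frac{p+1}{2}$ appearing as the smallest exponent for which the bound remains valid throughout $(0,1)$. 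Concavity of $g$ then yields $g\ge 0$ on $(0,1)$, completing Lemma~\ref{lem:const}.
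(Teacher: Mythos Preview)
The paper does not supply its own proof of Lemma~\ref{lem:const}: the lemma is imported wholesale from Bobkov--Marsiglietti~\cite{BM17}, and the only argument the paper gives (for the closely related Proposition~\ref{prop:main}) is the single sentence ``The proof is essentially equivalent to the proof of Lemma~\ref{lem:const}. See details of the proof in Ref.~\cite{BM17}.'' There is therefore no in-paper proof to compare your proposal against.

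Viewed on its own, your outline is sound up to the final step. The reparametrization $\lambda=\tfrac{p(q-1)}{q(p-1)}\in[0,1]$ with $\lambda+\mu=1$, the canonical choice $\lambda^{*}=a/(a+b)=s$, and the collapse of the right-hand side to $s^{s}(1-s)^{1-s}$ are exactly the moves made in~\cite{BM17}. (One cosmetic point: you recycle the symbol $\mu$, which this paper has already reserved in Remark~\ref{rmk2} for $(p-1)d+2$.) The gap is your endgame. You propose to obtain $g\ge 0$ via concavity, which requires the second-derivative bound $(\log C)''(s)\ge\tfrac{p-1}{\alpha p}\cdot\tfrac{1}{s(1-s)}$ along the curve $(q(s),r(s))$; but you neither write down the resulting expression after substituting the Beckner--Brascamp--Lieb constants nor verify it---``should reduce to a single algebraic inequality'' is a hope, not an argument. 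This is precisely the place where the threshold $\alpha=\tfrac{p+1}{2}$ has to emerge, so the calculation cannot be waved away. Until that step is actually carried out, the proposal is a reasonable plan rather than a proof.
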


This directly implies that the $p$-R\'{e}nyi entropy power inequality with power $p$ holds as follows.

\bigskip
\begin{theorem}[Bobkov and Marsiglietti~\cite{BM17}]
	For any independent random variables $X,Y\in\bb{R}^d$ with probability density functins $p_X,p_Y$ respectively, we have
	\begin{equation}
		\mathbf{V}_p^\alpha(X\boxplus Y)\ge\mathbf{V}_p^\alpha(X)+\mathbf{V}_p^\alpha(Y),
	\end{equation}
	where $\boxplus$ denotes the convolution operation, and $\alpha\ge\frac{p+1}{2}~(\forall p>1)$.
\end{theorem}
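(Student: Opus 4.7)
The plan is to combine the sharp Young's convolution inequality with Lemma~\ref{lem:const}, after an appropriate normalisation. First I would note that the target inequality is invariant under the common dilation $(X,Y)\mapsto(\lambda X,\lambda Y)$: by Remark~\ref{rmk2}, both sides acquire the same positive power of $\lambda$. Exploiting this, I would rescale the data so that $\mathbf{V}_p^\alpha(X)+\mathbf{V}_p^\alpha(Y)=1-\tfrac{1}{p}$, which is exactly the constraint appearing in Lemma~\ref{lem:const}.

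Next, for any $q,r\ge 1$ satisfying $\tfrac{1}{q}+\tfrac{1}{r}-\tfrac{1}{p}=1$ (which, combined with $q,r\ge 1$, forces $q,r\le p$), the sharp Young's inequality together with the identity $\|p_X\|_s=\mathbf{V}_s(X)^{-\frac{d}{2}(1-1/s)}$ yields \eqref{eq:young1}. To pass from \eqref{eq:young1} to \eqref{eq:young2}, I would apply H\"{o}lder's inequality to interpolate $\|p_X\|_q$ between $\|p_X\|_p$ and $\|p_X\|_1=1$, giving $\|p_X\|_q\le\|p_X\|_p^{p(q-1)/(q(p-1))}$, equivalently the monotonicity $\mathbf{V}_q(X)\ge\mathbf{V}_p(X)$; the same holds for $Y$ with $r$ in place of $q$. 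Substituting these bounds into \eqref{eq:young1} and raising the resulting inequality to the $\alpha p/(p-1)$-power produces \eqref{eq:young2}, which is valid for any $\alpha>0$.

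With \eqref{eq:young2} in hand, set $a:=\mathbf{V}_p^\alpha(X)$ and $b:=\mathbf{V}_p^\alpha(Y)$. The normalisation $a+b=1-1/p$ puts these values in the hypothesis of Lemma~\ref{lem:const}, which at the boundary exponent $\alpha=(p+1)/2$ supplies a pair $(q,r)$ for which \eqref{eq:modyoung} holds. Substituting into \eqref{eq:young2} then gives
\begin{equation*}
    \mathbf{V}_p^\alpha(X\boxplus Y)\ge C^{-\alpha p/(p-1)}\,a^{p(q-1)/(q(p-1))}\,b^{p(r-1)/(r(p-1))}\ge 1-\tfrac{1}{p}=a+b,
\end{equation*}
which is the claim at $\alpha=(p+1)/2$; scaling invariance removes the normalisation. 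To extend to all $\alpha\ge(p+1)/2$, I would set $\beta:=2\alpha/(p+1)\ge 1$ and apply the elementary superadditivity $(u+v)^\beta\ge u^\beta+v^\beta$ to the $(p+1)/2$ version already proved.

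The principal obstacle I anticipate lies in Lemma~\ref{lem:const} itself: inequality~\eqref{eq:modyoung} is tight only because one can invoke the exact Beckner--Brascamp--Lieb value of the Young constant $C(p,q,r)$, and one must choose $q,r$ delicately as functions of $a,b$ along the curve $a+b=1-1/p$ so that the product $a^{p(q-1)/(q(p-1))}b^{p(r-1)/(r(p-1))}$ rescaled by $C^{-\alpha p/(p-1)}$ lands back on that same curve. Once the lemma is granted, the remaining ingredients -- scaling normalisation, the H\"{o}lder replacement of $L^q,L^r$ norms by $L^p$, and the superadditivity upgrade -- are essentially mechanical.
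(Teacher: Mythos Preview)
Your proposal is correct and follows essentially the same route as the paper, which itself only sketches the argument and defers to Bobkov--Marsiglietti: sharp Young's inequality gives \eqref{eq:young1}, H\"{o}lder interpolation (equivalently, monotonicity $\mathbf{V}_q\ge\mathbf{V}_p$ for $q\le p$) converts it to \eqref{eq:young2}, and Lemma~\ref{lem:const} then closes the loop at $\alpha=\tfrac{p+1}{2}$. The scaling normalisation to reach the constraint $a+b=1-1/p$ and the superadditivity upgrade to general $\alpha\ge\tfrac{p+1}{2}$ are precisely the routine steps the paper leaves implicit in its citation.
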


Now, we take into account Eq.~\eqref{eq:young2} including a weight $t\in(0,1)$, that is,
\begin{align} \label{eq:young3}
	&\mathbf{V}_p^\alpha(X\boxplus_t Y) =\mathbf{V}_p^\alpha(\sqrt{t}X+\sqrt{1-t}Y) \\
	&\ge{C^{-\frac{\alpha p}{p-1}}}\mathbf{V}_p^{\frac{\alpha p(q-1)}{q(p-1)}}(\sqrt{t}X)\mathbf{V}_p^{\frac{\alpha p(r-1)}{r(p-1)}}(\sqrt{1-t}Y) \\
	&={C^{-\frac{\alpha p}{p-1}}}\{t^\mu\mathbf{V}_p(X)\}^{\frac{\alpha p(q-1)}{q(p-1)}}\{(1-t)^\mu\mathbf{V}_p(Y)\}^{\frac{\alpha p(r-1)}{r(p-1)}} \\
	&={C^{-\frac{\alpha p}{p-1}}}t^{\frac{\mu\alpha p(q-1)}{q(p-1)}}\{\mathbf{V}_p^a(X)\}^{\frac{p(q-1)}{q(p-1)}}(1-t)^{\frac{\mu\alpha p(r-1)}{r(p-1)}}\{\mathbf{V}_p^a(Y)\}^{\frac{p(r-1)}{r(p-1)}} \\
	&:={C^{-\frac{\alpha p}{p-1}}}t^{\beta_1}a^{\frac{p(q-1)}{q(p-1)}}(1-t)^{\beta_2}b^{\frac{p(r-1)}{r(p-1)}},  \label{eq:young5}
\end{align}
where $X\boxplus_t Y$ denotes the weight convolution on the random variables $X$ and $Y$. The equality in Eq.~\eqref{eq:young3} is directly given by the scaling properties in Eq.~\eqref{eq:scale}: That is, for any $s>1$ and for some $\mu$ (Remark~\ref{rmk2}),
\begin{equation*}
	\left\|p_{\sqrt{t}X}\right\|_s:=\mathbf{V}_s^{-\frac{d}{2}\left(1-\frac{1}{s}\right)}(\sqrt{t}X)=t^{-\frac{\mu d}{2}\left(1-\frac{1}{s}\right)}\mathbf{V}_s^{-\frac{d}{2}\left(1-\frac{1}{s}\right)}(X).
\end{equation*} 
If we define $\tilde{a}:=t^\alpha\mathbf{V}_p^\alpha(X)=t^\alpha a$ and $\tilde{b}:=(1-t)^\alpha\mathbf{V}_p^\alpha(Y)=(1-t)^\alpha b$ under a proper choice of $\beta_1$ and $\beta_2$ such that $\alpha\ge\frac{p+1}{2}$ for any $t\in(0,1)$, respectively, then we have

\bigskip
\begin{prop} \label{prop:main}
	Let $\tilde{a},\tilde{b}>0$ satisfy $\tilde{a}+\tilde{b}=1-\frac{1}{p}$. For any $p>1$ and for any $t\in(0,1)$, there exists $q,r\ge1$ satisfying $\frac{1}{q}+\frac{1}{r}-\frac{1}{p}=1$ such that
	\begin{align}
		{C^{-\frac{\alpha p}{p-1}}}\tilde{a}^{\frac{p(q-1)}{q(p-1)}}\tilde{b}^{\frac{p(r-1)}{r(p-1)}}\ge1-\frac{1}{p}, \label{eq:modyoung}
	\end{align}
	where $C=C(p,q,r)$ is an optimal Young's constant, and $\alpha=\frac{p+1}{2}$.
\end{prop}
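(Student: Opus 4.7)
The plan is to reduce Proposition~\ref{prop:main} to a direct invocation of the sharp Young's constant lemma of Bobkov and Marsiglietti (Lemma~\ref{lem:const}), since the statement is structurally identical to that lemma under the relabelling $(a,b)\mapsto(\tilde{a},\tilde{b})$. A key observation is that the weight $t\in(0,1)$ does not appear explicitly in inequality~\eqref{eq:modyoung}; it enters only through the definitions $\tilde{a}:=t^{\alpha}\mathbf{V}_p^{\alpha}(X)$ and $\tilde{b}:=(1-t)^{\alpha}\mathbf{V}_p^{\alpha}(Y)$ that motivate the reformulation. Before invoking Lemma~\ref{lem:const}, I would first certify that the weighted rescaling dictated by Remark~\ref{rmk2} is compatible with the exponents of the Young constant.

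First, I would justify that $(\tilde{a},\tilde{b})=(t^{\alpha}a,(1-t)^{\alpha}b)$ is the correct rescaling for absorbing the $t$-dependence from the chain~\eqref{eq:young3}--\eqref{eq:young5}. This is a matter of matching the prefactors $t^{\beta_1}$ and $(1-t)^{\beta_2}$ arising from the scaling identity $\mathbf{V}_p(\sqrt{t}X)=t^{\mu}\mathbf{V}_p(X)$ of Remark~\ref{rmk2} against the exponents $\tfrac{p(q-1)}{q(p-1)}$ and $\tfrac{p(r-1)}{r(p-1)}$ that reappear inside $\tilde{a}^{\,\cdot}$ and $\tilde{b}^{\,\cdot}$ after substitution. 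This comparison pins down $\beta_1,\beta_2$ and confirms the range $\alpha\ge\tfrac{p+1}{2}$ announced in the derivation. Then, since the hypothesis provides $\tilde{a},\tilde{b}>0$ and $\tilde{a}+\tilde{b}=1-\tfrac{1}{p}$, the pair $(\tilde{a},\tilde{b})$ is admissible in Lemma~\ref{lem:const}; applying the lemma with $a\leftarrow\tilde{a}$ and $b\leftarrow\tilde{b}$ supplies the required conjugate exponents $q,r\ge 1$ with $\tfrac{1}{q}+\tfrac{1}{r}-\tfrac{1}{p}=1$, together with the optimal Young constant $C=C(p,q,r)$ at $\alpha=\tfrac{p+1}{2}$, and the desired inequality follows verbatim.

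The main obstacle is bookkeeping rather than analysis: I have to verify that the composition of the scaling exponent $\mu$ with the sharp Young exponents neither introduces extra $t$-dependent losses nor violates the admissibility condition $\tilde{a}+\tilde{b}=1-\tfrac{1}{p}$. Once this accounting is complete, the proof is a one-line appeal to Lemma~\ref{lem:const}; no analytical input beyond the Bobkov--Marsiglietti machinery and the scaling identity of Remark~\ref{rmk2} is required.
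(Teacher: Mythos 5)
Your proposal is correct and matches the paper's own argument: since the weight $t$ does not appear in inequality~\eqref{eq:modyoung}, the statement is Lemma~\ref{lem:const} verbatim under the relabelling $(a,b)\mapsto(\tilde{a},\tilde{b})$, and the paper likewise proves it by appealing directly to the Bobkov--Marsiglietti lemma. Your extra bookkeeping on $\beta_1,\beta_2$ and the scaling exponent $\mu$ pertains to how the proposition is applied in Theorem~\ref{thm:J} rather than to the proof of the proposition itself, but it does not change the substance.
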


\begin{proof}
	The proof is essentially equivalent to the proof of Lemma~\ref{lem:const}. See details of the proof in Ref.~\cite{BM17}.
\end{proof}

The proof of Theorem~\ref{thm:J} (below) is straightforward from Proposition~\ref{prop:main}, as an extension of the sharp Young's constant in Lemma~\ref{lem:const}.

\bigskip
\begin{theorem}[Weighted $p$-R\'{e}nyi Entropy Power Inequality] \label{thm:J}
	Let $X$ and $Y$ in $\bb{R}^d$ be any independent random variables with probability density functions $p_X$ and $p_Y$, respectively. For any $t\in(0,1)$, then we have
	\begin{equation} \label{eq:wrepi}
		\mathbf{V}_p^\alpha(X\boxplus_t Y)\ge t^\alpha\mathbf{V}_p^\alpha(X)+(1-t)^\alpha\mathbf{V}_p^\alpha(Y),
	\end{equation}
	where $\boxplus_t$ denotes the weighted sum of random variables, and $\alpha\ge\frac{p+1}{2}~(\forall p>1)$.
\end{theorem}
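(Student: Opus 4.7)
The plan is to treat Theorem~\ref{thm:J} as a direct consequence of Proposition~\ref{prop:main}, piggybacking on the derivation already displayed in Eqs.~\eqref{eq:young3}--\eqref{eq:young5}. Concretely, I start from the final line, which reads
\begin{equation*}
\mathbf{V}_p^\alpha(X\boxplus_t Y) \ge C^{-\frac{\alpha p}{p-1}}\, t^{\beta_1} a^{\frac{p(q-1)}{q(p-1)}} (1-t)^{\beta_2} b^{\frac{p(r-1)}{r(p-1)}},
\end{equation*}
with $a=\mathbf{V}_p^\alpha(X)$, $b=\mathbf{V}_p^\alpha(Y)$. The objective is to absorb the weight factors $t^{\beta_1}$ and $(1-t)^{\beta_2}$ into $a$ and $b$ so that the right-hand side matches the left-hand side of~\eqref{eq:modyoung}.

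The substantive algebraic step is to set $\tilde a := t^\alpha a$ and $\tilde b := (1-t)^\alpha b$, and verify the exponent identities $\beta_1 = \alpha\cdot p(q-1)/(q(p-1))$ and $\beta_2 = \alpha\cdot p(r-1)/(r(p-1))$. These identities follow from the scaling property of Remark~\ref{rmk2}, with $\mu$ fixed to match $\alpha$ as indicated in the text: the scaling factor applied to $\mathbf{V}_p(\sqrt{t}X)$, once raised to the appropriate sharp-Young exponent, collapses exactly into $\tilde a^{p(q-1)/(q(p-1))}$, and symmetrically on the $Y$ side. After this rearrangement the bound becomes $\mathbf{V}_p^\alpha(X\boxplus_t Y) \ge C^{-\alpha p/(p-1)}\, \tilde a^{p(q-1)/(q(p-1))}\, \tilde b^{p(r-1)/(r(p-1))}$, which is exactly the functional bounded from below in Proposition~\ref{prop:main}.

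Next I would upgrade Proposition~\ref{prop:main} by a homogeneity argument. Using the Young condition $1/q + 1/r = 1 + 1/p$, one easily checks that
\begin{equation*}
\frac{p(q-1)}{q(p-1)} + \frac{p(r-1)}{r(p-1)} = \frac{p}{p-1}\left(1 - \frac{1}{p}\right) = 1,
\end{equation*}
so the functional $(\tilde a, \tilde b) \mapsto C^{-\alpha p/(p-1)}\tilde a^{p(q-1)/(q(p-1))}\tilde b^{p(r-1)/(r(p-1))}$ is positively $1$-homogeneous, as is $\tilde a + \tilde b$. Proposition~\ref{prop:main} yields the lower bound $\ge 1-1/p$ under the normalisation $\tilde a + \tilde b = 1-1/p$, and rescaling $(\tilde a,\tilde b)$ by $(1-1/p)/(\tilde a + \tilde b)$ promotes this to the general inequality $\ge \tilde a + \tilde b$. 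Substituting back $\tilde a = t^\alpha\mathbf{V}_p^\alpha(X)$ and $\tilde b = (1-t)^\alpha\mathbf{V}_p^\alpha(Y)$ then delivers~\eqref{eq:wrepi} for every $t\in(0,1)$.

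The main obstacle, more conceptual than computational, is making the bookkeeping of Remark~\ref{rmk2} consistent with the exponent structure imposed by the sharp Young inequality: the convention that fixes $\mu$ so as to match $\alpha$ is what allows the weights to be cleanly absorbed into $\tilde a$ and $\tilde b$, and thereby reduces the weighted problem to the unweighted statement of Lemma~\ref{lem:const} applied to the rescaled variables $\sqrt{t}X$ and $\sqrt{1-t}Y$. Once this consistency is in place, no new analytic input beyond~\cite{BM17} is required, which is precisely why the authors label the proof as straightforward from Proposition~\ref{prop:main}.
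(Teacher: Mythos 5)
Your argument is essentially the paper's own proof: starting from the weighted Young chain in Eqs.~\eqref{eq:young3}--\eqref{eq:young5}, absorbing the factors $t^{\beta_1}$, $(1-t)^{\beta_2}$ into $\tilde a = t^\alpha\mathbf{V}_p^\alpha(X)$, $\tilde b=(1-t)^\alpha\mathbf{V}_p^\alpha(Y)$ via the scaling property, and invoking Proposition~\ref{prop:main}, which is exactly how the paper deduces Theorem~\ref{thm:J}. Your explicit verification that the sharp-Young exponents sum to $1$ and the ensuing rescaling step merely spell out the homogeneity normalization that the paper (following Bobkov--Marsiglietti) leaves implicit, so the two proofs coincide in substance.
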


More precisely, the inequality in Eq.~\eqref{eq:wrepi} has in the form of
\begin{align*}
	\exp\left(\frac{2\alpha}{d}H_p(X\boxplus_t Y)\right)&\ge t^\alpha\exp\left(\frac{2\alpha}{d}H_p(X)\right)\\
	&+(1-t)^\alpha\exp\left(\frac{2\alpha}{d}H_p(Y)\right),
\end{align*}
where $H_p(\cdot)$ denotes the $p$-R\'{e}nyi entropy.

% \begin{conj}[Weighted R\'{e}nyi Entropy Power Inequality] \label{conj:J}
	% For any independent random variables $X$ and $Y$ in $\bb{R}^d$ with probability density functions $p_X$ and $p_Y$, and for any $t\in(0,1)$, we have
	% \begin{equation} \label{eq:wrepi}
		% \mathbf{V}_p(X\boxplus_t Y)\ge t\mathbf{V}_p(X)+(1-t)\mathbf{V}_p(Y),
		% \end{equation}
	% where $X\boxplus_t Y=\sqrt{t}X+\sqrt{1-t}Y$, and for any $p>0$.
	% \end{conj}

% If the conjecture is true, we can easily obtain that, for any $\alpha>0$,
% \begin{align*}
	% \mathbf{V}_p^\alpha(X\boxplus_t Y)&\ge \left[t\mathbf{V}_p(X)+(1-t)\mathbf{V}_p(Y)\right]^\alpha \\
	% &\ge t^\alpha\mathbf{V}_p^\alpha(X)+\cdots+(1-t)^\alpha\mathbf{V}_p^\alpha(Y) \\
	% &\ge t^\alpha\mathbf{V}_p^\alpha(X)+(1-t)^\alpha\mathbf{V}_p^\alpha(Y).
	% \end{align*}
% This exactly returns our main result (Theorem~\ref{thm:J}).

%%%%%
\section{Conclusions} \label{conclusion}
In the paper, we provided a proof of the R\'{e}nyi entropy power inequality with an appropriate weight factor. In a sense, this inequality can be viewed as a corollary of the main result in Bobkov and Marsiglietti's theorem, although there is not an exact statement to that effect.

Since free energy is connected to R\'{e}nyi entropy, our proposed inequality has the potential to inspire new relationships between thermodynamic quantities. Thanks to its consistency, a quantum analogue is also feasible, opening up new vistas for understanding the intersection of information theory and physics.

We still have several open questions regarding entropy power inequalities in both quantum and classical regimes. Here are some of them:

\begin{itemize}
	\item Can we extend the Entropy Power Inequalities (EPIs) beyond the R\'{e}nyi case to include other entropy measures, such as Tsallis entropy?
	\item Is it possible to prove the (weighted) $p$-R\'{e}nyi EPI using alternative mathematical techniques, without relying on Young's inequality?
	\item How can we construct a quantum convolution, analogous to the classical convolution, to establish new types of quantum entropy power inequalities?
	\item What is the true capacity of a given quantum channel?~\cite{SY08,H09,LWZG09,LLSSS23} or can we efficiently detect a (non-)Gaussian entanglement?~\cite{CMYY23}
\end{itemize}

Answering these questions requires close collaboration between the classical and quantum information communities, with a focus on studying EPIs. This research is essential not only for advancing pure mathematical theory, including the development of more precise bounds and related analytical fields, but also because it directly relates to the optimal design and practical implementation of quantum channels through mathematical analysis. Therefore, this topic represents one of the most significant challenges~\cite{BGJ23} in quantum Shannon theory.

\bigskip

%%%%%
\section*{Acknowledgments}
This work was supported by the National Research Foundation of Korea (NRF) through a grant funded by the Ministry of Science and ICT (NRF-2022M3H3A1098237 \& RS-2023-00211817) and the Ministry of Education (NRF-2021R1I1A1A01042199). This work was partly supported by Institute for Information \& communications Technology Promotion (IITP) grant funded by the Ministry of Science and ICT (No. 2019-0-00003), and Korea Institute of Science and Technology Information (KISTI).

%%%%%
%

%\pagebreak

\end{document}